\providecommand{\mathbold}[1]{\bm{#1}}
\newcommand{\vct}[1]{\mathbold{#1}}
\newcommand{\mtx}[1]{\mathbold{#1}}
\def \R 	{\mathbb{R}}
\def \P 	{\mathbb{P}}
\def \E 	{\mathbb{E}}
\newcommand{\mPsi}{\mtx{\Psi}}
\newcommand{\mA}{\mtx{A}}
\newcommand{\mC}{\mtx{C}}
\newcommand{\mD}{\mtx{D}}
\newcommand{\mU}{\mtx{U}}
\newcommand{\mI}{\mtx{I}}
\newcommand{\vu}{\vct{u}}
\newcommand{\va}{\vct{a}}
\newcommand{\vb}{\vct{b}}
\newcommand{\vx}{\vct{x}}
\newcommand{\vy}{\vct{y}}
\newcommand{\vz}{\vct{z}}
\newcommand{\vr}{\vct{r}}
\newcommand{\vs}{\vct{s}}
\newcommand{\vt}{\vct{t}}
\newcommand{\vv}{\vct{v}}
\newcommand{\vg}{\vct{g}}
\newcommand{\vh}{\vct{h}}
\newcommand{\Ds}{\mathcal{D}_s}
\newcommand{\Dc}{\mathcal{D}_c}
\newcommand{\argmin}{\operatorname*{arg\; min}}
\newcommand{\argmax}{\operatorname*{arg\; max}}
\newtheorem{theorem}{Theorem} 
\newtheorem{lemma}{Lemma}
\newtheorem{remark}{Remark}
\begin{document}

\title{On the Phase Transition of Corrupted Sensing}

\author{
  \IEEEauthorblockN{Huan Zhang\IEEEauthorrefmark{1}\IEEEauthorrefmark{2}, Yulong Liu\IEEEauthorrefmark{3}, and Hong Lei\IEEEauthorrefmark{1}}
	
\IEEEauthorblockA{\IEEEauthorrefmark{1}Institute of Electronics, Chinese Academy of Sciences, Beijing 100190, China}
	
\IEEEauthorblockA{\IEEEauthorrefmark{2}University of Chinese Academy of Sciences, Beijing 100049, China}
	
\IEEEauthorblockA{\IEEEauthorrefmark{3}School of Physics, Beijing Institute of Technology, Beijing 100081, China}

\thanks{This work was supported by the National Natural Science Foundation of China under Grant 61301188.}

}

%

\maketitle

\pagestyle{empty}  
\thispagestyle{empty} 

\begin{abstract}
  In \cite{FOY2014}, a sharp phase transition has been numerically observed when a constrained convex procedure is used to solve the corrupted sensing problem. In this paper, we present a theoretical analysis for this phenomenon. Specifically, we establish the threshold below which this convex procedure fails to recover signal and corruption with high probability. Together with the work in \cite{FOY2014}, we prove that a sharp phase transition occurs around the sum of the squares of spherical Gaussian widths of two tangent cones. Numerical experiments are provided to demonstrate the correctness and sharpness of our results.
\end{abstract}

\begin{IEEEkeywords}
  	Corrupted sensing, phase transition, Gaussian width, compressed sensing, signal separation.
\end{IEEEkeywords}

%
\IEEEpeerreviewmaketitle

\section{Introduction} \label{sec:introduction}
Corrupted sensing aims to recover a structured signal from a small number of corrupted measurements
\begin{equation} \label{corrupted sensing model}
	\vy = \mPsi \vx^{\star} + \vv^{\star},
\end{equation}
where $\mPsi \in \R^{m \times n}$ is the sensing measurement matrix which is assumed to have i.i.d. standard Gaussian entries in this paper, $\vx^{\star} \in \R^n$ is the unknown signal, and $\vv^{\star} \in \R^m$ is an unknown corruption. The goal is to estimate $\vx^{\star}$ and $\vv^{\star}$ from $\vy$ and $\mPsi$.

This problem is encountered in many practical applications, such as face recognition \cite{WRIGHT2009}, subspace clustering\cite{ELHAM2009}, network data analysis \cite{HAUPT2008}, and so on. Theoretical guarantees for this problem include sparse signal recovery from sparse corruption \cite{ WRIGHT2010, LI2013, NGUYEN2013, NGUYEN2013_2, POPE2013, STUDER2012, STUDER2014} and structured signal recovery from structured corruption \cite{FOY2014, mccoy2014sharp, Chen2017}.

To make the recovery possible, we will assume that both $\vx$ and $\vv$ have some structures which are promoted by the convex functions $f(\cdot)$ and $g(\cdot)$ respectively. When prior information about $f(\vx^{\star})$ or $g(\vv^{\star})$ is available, it is natural to consider the following program to recover the signal and corruption:
	\begin{equation} \label{constrained convex program 1}
	  	\min f(\vx), \quad \text{s.t. } \vy = \mPsi \vx + \vv, \quad g(\vv) \le g(\vv^{\star}),
  	\end{equation}
	or
	\begin{equation} \label{constrained convex program 2}
	  	\min g(\vv), \quad \text{s.t. } \vy = \mPsi \vx + \vv, \quad f(\vx) \le f(\vx^{\star}).
	\end{equation}

In \cite{FOY2014}, Foygel and Mackey provided conditions under which convex program (\ref{constrained convex program 1}) or (\ref{constrained convex program 2}) succeeds with high probability. Numerical experiments in \cite{FOY2014} also suggested that there is a sharp phase transition when (\ref{constrained convex program 1}) or (\ref{constrained convex program 2}) is used to solve the corrupted sensing problem.
However, little work has devoted to determining the threshold below which (\ref{constrained convex program 1}) or (\ref{constrained convex program 2}) fails with high probability. Therefore, theoretical understanding of the phase transition for program (\ref{constrained convex program 1}) and (\ref{constrained convex program 2}) is far from satisfactory.


In this paper, we present a theoretical analysis for the phase transition of (\ref{constrained convex program 1}) or (\ref{constrained convex program 2}).
In particular, we figure out the exact position of phase transition, and demonstrate that the phase transition occurs in a relatively narrow region.

\section{Preliminaries} \label{sec:preliminaries}

In this section, we present some preliminaries which will be used in our analysis.

Our result involves two important concepts: the Gaussian width and the tangent cone. Given a subset $T$ in $\R^n$, the Gaussian width is defined by
$$
\omega(T) = \E \sup_{\vt \in T} \left< \vg, \vt \right>, \text{ where }\vg \sim N(0,I_n).
$$
We also define two tangent cones corresponding to signal and corruption respectively. The tangent cone of $f(\cdot)$ at the true signal $\vx^{\star}$ is defined as
\begin{equation} \label{tangent cone for signal}
  	\mathcal{D}_s = \big\{ \vct{a} \in \R^n: \exists ~t > 0, f(\vx^{\star} + \vct{a}t) \le f(\vx^{\star}) \big\}.
\end{equation}
Similarly, the tangent cone of $g(\cdot)$ at the true corruption $\vv^{\star}$ is given by
\begin{equation} \label{tangent cone for corruption}
  	\mathcal{D}_c = \big\{ \vct{b} \in \R^m: \exists ~t > 0, g(\vv^{\star} + \vct{b}t) \le g(\vv^{\star}) \big\}.
\end{equation}

\section{Main results} \label{sec:main results}
In this section, we state our main results with some discussions.
\begin{theorem} [Failure of convex program (\ref{constrained convex program 1}) or (\ref{constrained convex program 2})] \label{th:main theorem 1}
  Consider convex program (\ref{constrained convex program 1}) or (\ref{constrained convex program 2}). Assume that both tangent cones $\Ds$ and $\Dc$ are closed. For any $t \geq 0$, if the measurement number $m$ satisfies
	$$
	\sqrt{m} < \sqrt{\omega^2 \big(\mathcal{D}_s \cap S^{n-1} \big) + \omega^2 \big(\mathcal{D}_c \cap S^{m-1} \big)} - t,
	$$
	then the constrained convex program (\ref{constrained convex program 1}) or (\ref{constrained convex program 2}) fails with probability at least $1-\exp(-t^2/2)$, where $S^{n-1}$ and $S^{m-1}$ are the unit sphere of $\R^n$ and $\R^m$ respectively.
\end{theorem}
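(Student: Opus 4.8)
The plan is to translate the failure of the convex program into the statement that a Gaussian min--max functional is nonpositive, and then to control that functional by combining a Gordon-type Gaussian comparison inequality with Gaussian concentration; the translation between spherical Gaussian widths and statistical dimensions then reproduces exactly the claimed threshold. Concretely, consider program~(\ref{constrained convex program 1}) (the argument for~(\ref{constrained convex program 2}) is symmetric). Since $(\vx^{\star},\vv^{\star})$ is feasible, the program fails to recover it as the unique solution exactly when some feasible $(\vx,\vv)\neq(\vx^{\star},\vv^{\star})$ satisfies $f(\vx)\le f(\vx^{\star})$. Writing $\vx=\vx^{\star}+\va$, $\vv=\vv^{\star}+\vb$, feasibility forces $\mPsi\va+\vb=\vO$ and $g(\vv^{\star}+\vb)\le g(\vv^{\star})$, while the objective condition reads $f(\vx^{\star}+\va)\le f(\vx^{\star})$. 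Using convexity of $f$ and $g$ — which lets one rescale $(\va,\vb)$ by a small positive factor so that both inequalities hold together — and the hypothesis that $\Ds$ and $\Dc$ are closed, one sees this is equivalent to the existence of $\va\in\Ds\cap S^{n-1}$ with $-\mPsi\va\in\Dc$. Writing membership in $\Dc$ through its polar, $\vw\in\Dc\iff\langle\vu,\vw\rangle\le0\ \forall\vu\in\Dc^{\circ}$, and substituting $\vu\mapsto-\vu$, failure becomes equivalent to
\[
  \hat g(\mPsi)\ :=\ \min_{\va\in\Ds\cap S^{n-1}}\ \max_{\vu\in(-\Dc)^{\circ}\cap S^{m-1}}\ \langle\vu,\mPsi\va\rangle\ \le\ 0 ,
\]
the degenerate cases $\Ds=\{\vO\}$ and $\Dc=\R^{m}$ being trivial.

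\emph{Concentration.} For fixed unit $\va,\vu$ the map $\mPsi\mapsto\langle\vu,\mPsi\va\rangle$ is $1$-Lipschitz in the Frobenius norm, hence so is $\hat g$, being a minimum of maxima of such maps. Gaussian concentration then gives $\P[\hat g(\mPsi)>\E\hat g(\mPsi)+r]\le\exp(-r^{2}/2)$ for every $r\ge0$, so it is enough to show $\E\hat g(\mPsi)\le-t$; the deterministic reduction then yields $\P[\text{failure}]=\P[\hat g(\mPsi)\le0]\ge1-\exp(-t^{2}/2)$.

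\emph{Comparison and width bookkeeping.} Applying Gordon's Gaussian min--max comparison inequality, in the form that \emph{upper}-bounds the min--max, to the process $\langle\vu,\mPsi\va\rangle$ over $(\Ds\cap S^{n-1})\times((-\Dc)^{\circ}\cap S^{m-1})$, one replaces $\mPsi$ by independent standard Gaussian vectors $\vg\in\R^{n}$ and $\vh\in\R^{m}$; the resulting comparison process $\langle\vh,\vu\rangle+\langle\vg,\va\rangle$ is separable, so the min--max splits and
\[
  \E\hat g(\mPsi)\ \le\ \E\!\!\max_{\vu\in(-\Dc)^{\circ}\cap S^{m-1}}\!\!\langle\vh,\vu\rangle\ +\ \E\!\!\min_{\va\in\Ds\cap S^{n-1}}\!\!\langle\vg,\va\rangle\ =\ \omega\!\big((-\Dc)^{\circ}\cap S^{m-1}\big)\ -\ \omega\!\big(\Ds\cap S^{n-1}\big) ,
\]
using $\omega(-T)=\omega(T)$. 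Writing $\delta(\cdot)$ for the statistical dimension, recall $\omega^{2}(\mathcal K\cap S^{d-1})\le\delta(\mathcal K)$, $\delta(\mathcal K)+\delta(\mathcal K^{\circ})=d$, and $\delta(-\mathcal K)=\delta(\mathcal K)$ for any closed convex cone $\mathcal K\subseteq\R^{d}$; with $\mathcal K=\Dc$ these give $\omega((-\Dc)^{\circ}\cap S^{m-1})\le\sqrt{m-\delta(\Dc)}\le\sqrt{m-\omega^{2}(\Dc\cap S^{m-1})}$ (the radicand is positive since $\omega(\Dc\cap S^{m-1})\le\omega(S^{m-1})<\sqrt m$). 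Hence $\E\hat g(\mPsi)\le\sqrt{m-\omega^{2}(\Dc\cap S^{m-1})}-\omega(\Ds\cap S^{n-1})$. Squaring the hypothesis $\sqrt m<\sqrt{\omega^{2}(\Ds\cap S^{n-1})+\omega^{2}(\Dc\cap S^{m-1})}-t$ (both sides nonnegative) and using $\sqrt m\ge\sqrt{m-\omega^{2}(\Dc\cap S^{m-1})}$ yields $\omega^{2}(\Ds\cap S^{n-1})>\big(\sqrt{m-\omega^{2}(\Dc\cap S^{m-1})}+t\big)^{2}$, so $\omega(\Ds\cap S^{n-1})>\sqrt{m-\omega^{2}(\Dc\cap S^{m-1})}+t$ and therefore $\E\hat g(\mPsi)<-t$, as required.

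\emph{Where the difficulty lies.} The first two steps are essentially routine; the crux is the Gaussian comparison. The feasible sets above are cones intersected with spheres rather than convex bodies, and the direction of Gordon's inequality needed here — upper-bounding the min--max, so as to certify failure rather than success — is the delicate one, which in the standard two-sided Gaussian min--max theorem relies on convexity. One therefore has to establish it for these cone--sphere sets: for instance by passing to the convex hull $\mathrm{conv}(\Ds\cap S^{n-1})$ and exploiting positive homogeneity of $\va\mapsto\dist(\mPsi\va,-\Dc)$ (so that vanishing of the distance on the hull already forces a zero in $\Ds\cap S^{n-1}$ when $\Ds$ is pointed), or equivalently by invoking the kinematic description of when a Gaussian linear map carries one closed convex cone into another. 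Establishing this comparison with the right constant is the substantive part of the proof; a secondary nuisance is the careful handling of the degenerate cones and of exactly what one means by ``failure'' (non-uniqueness versus non-optimality of $(\vx^{\star},\vv^{\star})$).
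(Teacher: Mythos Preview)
Your reduction of failure to the event $\hat g(\mPsi)\le 0$, your Lipschitz/concentration argument, and your width bookkeeping at the end are all fine and match what the paper does. The gap is exactly where you yourself flag it: the Gaussian comparison. Gordon's inequality, in the form the paper states (Lemma~\ref{Gordon's inequality}) and in the form that holds for the bilinear process $\langle\vu,\mPsi\va\rangle$ over sphere--cone index sets, yields only the \emph{lower} bound
\[
\E\min_{\va}\max_{\vu}\langle\vu,\mPsi\va\rangle\ \ge\ \E\min_{\va}\langle\vg,\va\rangle+\E\max_{\vu}\langle\vh,\vu\rangle,
\]
because the increment condition for different first indices goes the wrong way. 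The reverse inequality you invoke --- the ``CGMT'' direction --- genuinely requires the inner maximisation to be over a convex set, and $(-\Dc)^{\circ}\cap S^{m-1}$ is not convex. Your sketched fix of passing to $\mathrm{conv}(\Ds\cap S^{n-1})=\Ds\cap B^{n}$ does not work as written: once $\va=0$ is admissible, the min of $\dist(\mPsi\va,-\Dc)$ is identically zero, so the functional you would apply CGMT to becomes trivial and carries no information.

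The paper sidesteps this entirely. Instead of upper--bounding $\hat g$, it invokes the polar reformulation of Oymak--Tropp (quoted here as Lemma~\ref{sufficient condition for failure: step 2}): for $\mA=[\mPsi\ \mI]$, the null--space condition holds whenever
\[
\min_{\|\vr\|=1}\ \dist\bigl(\mA^{*}\vr,\ \Ds^{\circ}\times\Dc^{\circ}\bigr)>0.
\]
After splitting on whether $\vr\in\Dc^{\circ}$, this reduces (via a minimax step and the duality of a cone with its polar) to lower--bounding
\[
\min_{\vr\in\Dc^{\circ}\cap S^{m-1}}\ \max_{\vu\in\Ds\cap S^{n-1}}\langle\mPsi\vu,\vr\rangle,
\]
which is Gordon in the \emph{standard} direction, giving $\omega(\Ds\cap S^{n-1})-\omega(\Dc^{\circ}\cap S^{m-1})$ and then the same width arithmetic you carry out. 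In other words, the substantive idea you are missing is not a sharper Gordon but the Oymak--Tropp polar lemma, which trades your unattainable upper bound for an attainable lower bound on a different functional.
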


\begin{proof}
  	See Appendix \ref{app:A}.
\end{proof}


\begin{remark} [Phase transition of corrupted sensing] \label{remark by combining mackey}
  	Recall Theorem $1$ and Remark $2$ in \cite{FOY2014}, which stated that \footnote{The authors believe that the small additive constants are artifacts of the proof technique.} \footnote{The original result is stated in terms of Gaussian complexity $\gamma(\Ds \cap B^n)$, difined as $\gamma^2(\Ds \cap B^n) = \E \big( \sup_{t \in \Ds \cap B^n} \left< g, t \right> \big)^2$, where $B^n$ denotes the $\ell_2$ unit ball in $\R^n$. However, as the author stated, the Gaussian complexity $\gamma(\Ds \cap B^n)$ is only very slightly larger than $\omega(\Ds \cap S^{n-1})$.} when
	$$
	\sqrt{m} \ge \sqrt{\omega^2(\Ds \cap S^{n-1}) + \omega^2(\Dc \cap S^{m-1})} + \frac{1}{\sqrt{2}} + \frac{1}{\sqrt{2 \pi}} + t,
	$$
	the constrained convex program (\ref{constrained convex program 1}) or (\ref{constrained convex program 2}) succeeds with probability at least $1-\exp(-t^2/2)$. This, together with our result Theorem \ref{th:main theorem 1}, demonstrate that the phase transition of corrupted sensing occurs around
	$$
	\omega^2 \big(\mathcal{D}_s \cap S^{n-1} \big) + \omega^2 \big(\mathcal{D}_c \cap S^{m-1} \big),
	$$
	and the width of phase transition area is about
	$$
	C\sqrt{\omega^2 \big(\mathcal{D}_s \cap S^{n-1} \big) + \omega^2 \big(\mathcal{D}_c \cap S^{m-1} \big)},
	$$
	where $C$ is an absolute constant.
\end{remark}

\begin{remark} \label{relation with statistical dimension}
  	Our result also agrees with the result of Amelunxen \textit{el al.} \cite{AMEL2014}. Indeed, by Proposition 10.2 and Proposition 3.1 (9) in \cite{AMEL2014}, we have
	$$
	\omega^2 \big(\mathcal{D}_s \cap S^{n-1} \big) + \omega^2 \big(\mathcal{D}_c \cap S^{m-1} \big) \approx \delta(\Ds) + \delta(\Dc) = \delta(\Ds \times \Dc),
	$$
	where $\delta(\mathcal{D})$ denotes the statistical dimension of a convex cone $\mathcal{D}$.
\end{remark}

\begin{remark} \label{compare with AMEL2014}
  	In \cite{AMEL2014}, Amelunxen \textit{et al.} considered the phase transition of the following demixing problem:
  	$$
  	\vz = \vx + \mU \vy,
  	$$
	where $\vx$, $\vy \in \R^n$ are unknown signals and $\mU \in \R^{n \times n}$ is a random orthogonal matrix. This model is different from ours since we have random Gaussian measurement matrix with $m \ll n$.
\end{remark}

\begin{remark} \label{compare with OYMAK2015}
  	In \cite{OYMAK2015}, Oymak and Tropp considered the phase transition of the following demixing model:
  	$$
  	\vy = \mPsi_0 \vx_0 + \mPsi_1 \vx_1,
  	$$
	where $\vx_0$, $\vx_1 \in \R^n$ are two signals and $\mPsi_0$, $\mPsi_1 \in \R^{m \times n}$ are some random transformation matrices. 
	This model is also different from ours since $\mPsi_1$ is a deterministic matrix in our case. This makes the problem more difficult to analyze. 
\end{remark}

\section{Simulation Results}
In this section, we employ a numerical experiment to verify our theoretical guarantees (Theorem \ref{th:main theorem 1}). In the experiment, both signal and corruption are designed to be sparse vectors. We use CVX \cite{CVX1} \cite{CVX2} to solve the convex program (\ref{constrained convex program 1}) or (\ref{constrained convex program 2}).

In the experiment, we assume that the prior information of $f(\vx^{\star})$ is known exactly, and solve program (\ref{constrained convex program 2}). The experiment settings are as follows: the ambient dimension $n$ is set to $128$, the measurement number $m=n=128$, the sparsity level of signal changes from $1$ to $n$ with step $1$, and the same for corruption. For every sparsity level of signal and corruption, we run and solve (\ref{constrained convex program 2}) $20$ times. We declare success if the solution to (\ref{constrained convex program 2}), denoted by $(\hat{\vx}, \hat{\vv})$, satisfies $\|\hat{\vx} - \vx^{\star}\|_2 \le 10^{-3}$. 
Then we get the empirical probability of successful recovery. At last, we plot the theoretical curve predicted by Theorem \ref{th:main theorem 1}.

Our numerical experiment result is shown in Fig. \ref{fig:phase transition 1}. We can see that the theoretical threshold given by Theorem \ref{th:main theorem 1} is closely matched with the empirical phase transition. It means that our theory can give a reliable prediction of the phase transition curve.
\begin{figure}
  	\centering
  	\includegraphics[width=.45\textwidth]{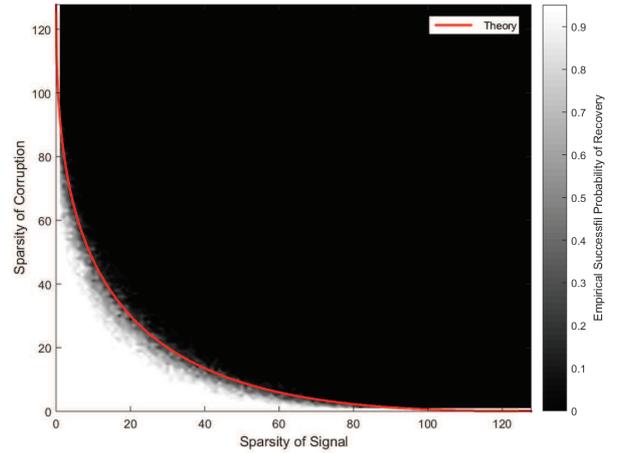}
	\caption{Phase transition for constrained convex program \ref{constrained convex program 2}. The red curve plots the phase transition threshold predicted by Theorem \ref{th:main theorem 1}.}
	\label{fig:phase transition 1}
\end{figure}

\section{Conclusion} \label{sec:conclusion}
This paper studied the problem of phase transition when we use convex program to solve corrupted sensing problem. Our results, together with previous work \cite{FOY2014}, gave the exact location of phase transition and the size of transition region. Simulations were provided to verify the correctness of our results. Our ongoing work is to establish a general framework to analyze the phase transition of various convex programs with noise-free or noisy data.

\appendices
\section{Proof of Main Results} \label{app:A}
In this section, we present proof for our main result (Theorem \ref{th:main theorem 1}). First, we will establish a sufficient condition under which convex program (\ref{constrained convex program 1}) or (\ref{constrained convex program 2}) fails, then some necessary tools are introduced, and at last, we give the proof for Theorem \ref{th:main theorem 1}.

\subsection{Sufficient Condition for failure}
In this subsection, we establish an easy-to-handle sufficient condition under which program (\ref{constrained convex program 1}) or (\ref{constrained convex program 2}) fails.
\begin{lemma} \label{sufficient condition for failure: step 1}
  	Let $\mathcal{D}_s$ and $\mathcal{D}_c$ denote the signal and the corruption tangent cones defined in (\ref{tangent cone for signal}) and (\ref{tangent cone for corruption}) respectively. Then a sufficient condition under which constrained convex program (\ref{constrained convex program 1}) or (\ref{constrained convex program 2}) fails is
	\begin{equation} \label{eq:sufficient condition for failure: step 1}
	  	\min_{(\va,\vb) \in (\Ds \times \Dc) \cap S^{n+m-1}} \big\| \mPsi \va + \vb \big\| = 0.
  	\end{equation}
	In other words, the subset $\Ds \times \Dc \cap S^{n+m-1}$ intersects the null space of matrix $
	\begin{bmatrix}
	  	\mPsi & \mI
	\end{bmatrix}$.
\end{lemma}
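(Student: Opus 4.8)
The plan is to show that whenever (\ref{eq:sufficient condition for failure: step 1}) holds, one can construct a point $(\vx,\vv)$ that is feasible for (\ref{constrained convex program 1}) (resp.\ (\ref{constrained convex program 2})), is distinct from $(\vx^\star,\vv^\star)$, and whose objective value is no larger than that of $(\vx^\star,\vv^\star)$. The existence of such an alternative optimal (or strictly better) feasible point means $(\vx^\star,\vv^\star)$ cannot be certified as the unique minimizer, which is precisely the sense in which the convex program is said to fail.

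To carry this out, suppose the minimum in (\ref{eq:sufficient condition for failure: step 1}) is zero; since $\Ds$ and $\Dc$ are closed, $(\Ds \times \Dc) \cap S^{n+m-1}$ is compact, the minimum is attained at some $(\va,\vb)$, and in particular $(\va,\vb) \ne \vO$ while $\mPsi\va + \vb = \vO$. By the definition (\ref{tangent cone for signal}) of $\Ds$ there is $t_1 > 0$ with $f(\vx^\star + t_1\va) \le f(\vx^\star)$, and convexity of $f$ upgrades this to $f(\vx^\star + t\va) \le f(\vx^\star)$ for every $t \in (0,t_1]$; similarly, using (\ref{tangent cone for corruption}), $g(\vv^\star + t\vb) \le g(\vv^\star)$ for every $t \in (0,t_2]$ with some $t_2 > 0$. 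Then I would fix any $t$ with $0 < t \le \min\{t_1,t_2\}$ and put $\vx = \vx^\star + t\va$, $\vv = \vv^\star + t\vb$, and verify: $\mPsi\vx + \vv = \mPsi\vx^\star + \vv^\star + t(\mPsi\va + \vb) = \vy$, so the equality constraint holds; the inequality $g(\vv) \le g(\vv^\star)$ (resp.\ $f(\vx) \le f(\vx^\star)$) makes $(\vx,\vv)$ feasible for (\ref{constrained convex program 1}) (resp.\ (\ref{constrained convex program 2})); its objective $f(\vx) \le f(\vx^\star)$ (resp.\ $g(\vv) \le g(\vv^\star)$) does not exceed that of the truth; and $(\vx,\vv) \ne (\vx^\star,\vv^\star)$ because $t > 0$ and $(\va,\vb) \ne \vO$.

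This is essentially a one-step perturbation argument, so I do not anticipate a genuine obstacle; the steps that deserve a little care are the convexity argument that lets a single $t$ fulfil both descent conditions at once, and making precise the meaning of ``failure'' --- namely that an alternative feasible point whose objective is at most the optimum prevents unique recovery of $(\vx^\star,\vv^\star)$, whether the new objective is strictly smaller or merely tied. Note also that the constructed pair $(\vx,\vv)$ serves both (\ref{constrained convex program 1}) and (\ref{constrained convex program 2}) verbatim --- only the roles of the objective and of the inequality constraint are interchanged --- so the two cases are handled simultaneously.
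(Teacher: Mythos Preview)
Your argument is correct and is exactly the standard perturbation argument that the paper has in mind: the paper does not write out the proof but simply remarks that the lemma generalizes Proposition~2.1 of \cite{CHAN2012} and that the proof is analogous. Your write-up is a faithful execution of that analogy; the only minor point is that closedness of $\Ds$ and $\Dc$ is not part of the lemma's hypotheses (it enters later in Theorem~\ref{th:main theorem 1}), but since the statement uses ``$\min$'' rather than ``$\inf$'' this is harmless.
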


\begin{proof}
  	Lemma \ref{sufficient condition for failure: step 1} is a generalization of Proposition 2.1 of \cite{CHAN2012}. The proof is similar, and hence is omitted.
\end{proof}



Although Lemma \ref{sufficient condition for failure: step 1} gives a sufficient condition for failure, it is difficult to check when (\ref{eq:sufficient condition for failure: step 1}) holds. The following lemma can overcome this drawback.

\begin{lemma} [Sufficient condition for failure, Proposition 3.8, \cite{OYMAK2015}] \label{sufficient condition for failure: step 2}
  	Under the condition of Lemma \ref{sufficient condition for failure: step 1}, if both $\Ds$ and $\Dc$ are closed, a sufficient condition for (\ref{eq:sufficient condition for failure: step 1}) to hold is
	\begin{equation} \label{eq:sufficient condition for failure: step 2}
	  	\min_{\| \vr \| = 1} \min_{\vs \in (\Ds \times \Dc)^{\circ}} \big\| \vs - \mA^* \vr \big\| > 0,
	\end{equation}
	where $(\Ds \times \Dc)^{\circ}$ denotes the polar cone of $\Ds \times \Dc$, $\mA =
	\begin{bmatrix}
	  	\mPsi & \mI
	\end{bmatrix}$, and $\mI$ denotes the identity matrix.
\end{lemma}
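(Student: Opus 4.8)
By Lemma~\ref{sufficient condition for failure: step 1}, program (\ref{constrained convex program 1}) or (\ref{constrained convex program 2}) fails once $\min_{\vu \in (\Ds\times\Dc)\cap S^{n+m-1}}\|\mA\vu\| = 0$. Write $C=\Ds\times\Dc$; since $\Ds$ and $\Dc$ are closed, $C$ is a closed convex cone in $\R^{n+m}$, so $C\cap S^{n+m-1}$ is compact and $\vu\mapsto\|\mA\vu\|$ attains its minimum there. Hence $\min_{\vu\in C\cap S^{n+m-1}}\|\mA\vu\|=0$ is \emph{equivalent} to $C\cap\ker\mA\neq\{\vO\}$, and it suffices to show that (\ref{eq:sufficient condition for failure: step 2}) forces $C\cap\ker\mA$ to contain a nonzero vector. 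The plan is to prove the contrapositive: assuming $C\cap\ker\mA=\{\vO\}$, I would produce a unit vector $\vr$ with $\mA^*\vr\in C^{\circ}$, which makes the left-hand side of (\ref{eq:sufficient condition for failure: step 2}) equal to $0$.

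The second step is a cone-duality argument. Finding a unit $\vr$ with $\mA^*\vr\in C^{\circ}$ is the same as finding a nonzero $\vr\in(\mA C)^{\circ}$, since $\langle\mA^*\vr,\vu\rangle=\langle\vr,\mA\vu\rangle\le 0$ for all $\vu\in C$ is exactly the statement $\vr\in(\mA C)^{\circ}$; such a $\vr$ exists (after normalization) precisely when the convex cone $\mA C$ is not all of $\R^m$. Here the hypothesis $C\cap\ker\mA=\{\vO\}$ is used twice. First, since $C$ is a closed convex cone whose recession cone is $C$ itself and no nonzero recession direction lies in $\ker\mA$, the image $\mA C$ is again a \emph{closed} convex cone (a standard fact on linear images of closed convex sets). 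Second --- and this is the delicate point --- I must rule out $\mA C=\R^m$: were that so, then choosing for each coordinate vector $\vct{e}_i$ preimages $\vu_i,\vu_i'\in C$ with $\mA\vu_i=\vct{e}_i$ and $\mA\vu_i'=-\vct{e}_i$ would give $\vu_i+\vu_i'\in C\cap\ker\mA=\{\vO\}$, hence $\pm\vu_i\in C$, so $C$ would contain the $m$-dimensional subspace $V=\mathrm{span}(\vu_1,\dots,\vu_m)$ on which $\mA$ acts as an isomorphism. But $V$ is a subspace sitting inside the \emph{product} cone $\Ds\times\Dc$, so its coordinate projections must land in the lineality spaces $\Ds\cap(-\Ds)$ and $\Dc\cap(-\Dc)$, forcing $\dim\big(\Ds\cap(-\Ds)\big)+\dim\big(\Dc\cap(-\Dc)\big)\ge m$ --- a degenerate configuration in which the tangent cones essentially fill $\R^n$ and $\R^m$, excluded in Proposition~3.8 of \cite{OYMAK2015} and irrelevant for Theorem~\ref{th:main theorem 1}, whose hypothesis on $m$ is then vacuous. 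Once $\mA C\subsetneq\R^m$, a separating hyperplane through the origin supplies a nonzero $\vr_0\in(\mA C)^{\circ}$; set $\vr=\vr_0/\|\vr_0\|$.

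Assembling the pieces, $\mA^*\vr\in C^{\circ}$ gives $\min_{\vs\in C^{\circ}}\|\vs-\mA^*\vr\|=0$, hence $\min_{\|\vr\|=1}\min_{\vs\in C^{\circ}}\|\vs-\mA^*\vr\|=0$, the negation of (\ref{eq:sufficient condition for failure: step 2}); by contraposition (\ref{eq:sufficient condition for failure: step 2}) implies $C\cap\ker\mA\neq\{\vO\}$, i.e.\ (\ref{eq:sufficient condition for failure: step 1}) holds, and Lemma~\ref{sufficient condition for failure: step 1} yields failure. I expect the only real obstacle to be the lineality/dimension bookkeeping ruling out $\mA C=\R^m$; everything else is routine convex-cone duality (the bipolar theorem, closedness of linear images of cones, separation), and the block form $\mA=[\,\mPsi\ \ \mI\,]$ enters only through its surjectivity, which pins $\dim\ker\mA=n$ and keeps the dimension count clean.
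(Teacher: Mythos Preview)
The paper does not supply its own proof of this lemma: it is quoted verbatim as Proposition~3.8 of \cite{OYMAK2015}, so there is no in-paper argument to compare your proposal against. Your contrapositive-plus-duality outline is the standard route and is essentially what the cited reference does: reduce to showing that $C\cap\ker\mA=\{\vO\}$ forces $(\mA C)^{\circ}\neq\{\vO\}$, hence some unit $\vr$ with $\mA^*\vr\in C^{\circ}$.

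There is one soft spot in your handling of the degenerate case $\mA C=\R^m$. You deduce that $C$ contains an $m$-dimensional subspace $V\subset L_s\times L_c$ (with $L_s=\Ds\cap(-\Ds)$, $L_c=\Dc\cap(-\Dc)$) and conclude $\dim L_s+\dim L_c\ge m$. That inequality alone does \emph{not} make the hypothesis of Theorem~\ref{th:main theorem 1} vacuous, because $\Ds$ and $\Dc$ could be strictly larger than their lineality spaces, inflating the Gaussian widths. However, your argument can be pushed one step further to close the gap: given any $(\va,\vb)\in C$, set $\vy=\mA(\va,\vb)$ and pick $(\va',\vb')\in C$ with $\mA(\va',\vb')=-\vy$ (possible since $\mA C=\R^m$); then $(\va+\va',\vb+\vb')\in C\cap\ker\mA=\{\vO\}$ forces $-\va\in\Ds$ and $-\vb\in\Dc$. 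Thus every element of $C$ already lies in $L_s\times L_c$, i.e.\ $\Ds=L_s$ and $\Dc=L_c$ are linear subspaces with $\dim\Ds+\dim\Dc=m$ exactly. For a $k$-dimensional subspace $L$ one has $\omega(L\cap S)=\E\|\vg_k\|$ with $(\E\|\vg_k\|)^2\le k$, so $\omega^2(\Ds\cap S^{n-1})+\omega^2(\Dc\cap S^{m-1})\le m$, and the hypothesis $\sqrt{m}<\sqrt{\omega^2+\omega^2}-t$ of Theorem~\ref{th:main theorem 1} is indeed empty for every $t\ge 0$. With this refinement your plan is complete; without it the dimension bookkeeping you flagged as ``the only real obstacle'' does not quite finish the job.
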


\begin{remark} \label{remark for sufficient condition}
  	One can easily check that 
	$$
	(\Ds \times \Dc)^{\circ} = \Ds^{\circ} \times \Dc^{\circ}.
	$$
	Thus, the sufficient condition under which convex program (\ref{constrained convex program 1}) or (\ref{constrained convex program 2}) fails can be rewritten as
	\begin{equation} \label{eq:remark for sufficient condition}
	  	\min_{\| \vr \| = 1} \min_{\vs \in \Ds^{\circ} \times \Dc^{\circ}} \big\| \vs - \mA^*\vr \big\| > 0.
	\end{equation}
\end{remark}
In the following parts, we will prove that (\ref{eq:remark for sufficient condition}) holds with high probability when the condition of Theorem \ref{th:main theorem 1} is satisfied. Before this, let's state some tools that will be used in our proof.
  	
\subsection{Other Useful Tools}

\begin{lemma}[Gordon's inequality, Theorem 3.16, \cite{LEDOUX1991}] \label{Gordon's inequality}
  	Let $(X_{\vu \vt})_{\vu \in U, \vt \in T}$ and $(Y_{\vu \vt})_{\vu \in U, \vt \in T}$ be two Gaussian processes indexed by pairs of points $(\vu,\vt)$ in a product set $U \times T$. Assume that
	$$
	\E (X_{\vu \vt}-X_{\vu \vs})^2 \le \E (Y_{\vu \vt} - Y_{\vu \vs})^2 \quad \text{for all }\vu,\vt,\vs; \\
	$$
	$$
	\E (X_{\vu \vt}-X_{\vv \vs})^2 \ge \E (Y_{\vu \vt} - Y_{\vv \vs})^2 \quad \text{for all }\vu \neq \vv \text{ and all }\vt,\vs.
	$$
	Then we have
	$$
	\E \inf_{\vu \in U} \sup_{\vt \in T} X_{\vu \vt} \le \E \inf_{\vu \in U} \sup_{\vt \in T} Y_{\vu \vt}.
	$$
\end{lemma}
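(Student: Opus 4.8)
The plan is to prove this by the classical Gaussian interpolation (Slepian--Gordon) technique, which reduces the min--max comparison to a sign computation on the Hessian of a smooth surrogate. First I would reduce to the case where $U$ and $T$ are finite and where the two processes are independent. Independence is free, since the conclusion depends only on the marginal laws of $(X_{\vu\vt})$ and $(Y_{\vu\vt})$ separately, so I may realize them on a common product space; the finite case yields the general one by a standard approximation/monotone-limit argument using separability of the processes. So fix finite index sets and view $X=(X_{\vu\vt})$ and $Y=(Y_{\vu\vt})$ as independent centered Gaussian vectors.

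Next I introduce a smooth surrogate for $\inf_{\vu}\sup_{\vt}$. Combining the soft-max $\frac1\gamma\log\sum_{\vt}e^{\gamma z_{\vu\vt}}$ over $\vt$ with the soft-min $-\frac1\beta\log\sum_{\vu}e^{-\beta(\cdot)}$ over $\vu$, set
$$
f_{\beta,\gamma}(z) = -\frac1\beta\log\sum_{\vu}\exp\!\Big(-\frac{\beta}{\gamma}\log\sum_{\vt}e^{\gamma z_{\vu\vt}}\Big),
$$
so that $f_{\beta,\gamma}(z)\to\inf_{\vu}\sup_{\vt}z_{\vu\vt}$ as $\beta,\gamma\to\infty$. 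I then interpolate: put $Z(\tau)=\sqrt{\tau}\,X+\sqrt{1-\tau}\,Y$ for $\tau\in[0,1]$ and $\psi(\tau)=\E f_{\beta,\gamma}(Z(\tau))$, so $\psi(1)=\E f_{\beta,\gamma}(X)$ and $\psi(0)=\E f_{\beta,\gamma}(Y)$. Differentiating in $\tau$ and applying Gaussian integration by parts (Stein's identity), the $\sqrt{\tau}$ and $\sqrt{1-\tau}$ factors cancel and one gets
$$
\psi'(\tau)=\frac12\sum_{(\vu\vt),(\vv\vs)}\big(\E[X_{\vu\vt}X_{\vv\vs}]-\E[Y_{\vu\vt}Y_{\vv\vs}]\big)\,\E\big[\partial^2_{z_{\vu\vt}z_{\vv\vs}}f_{\beta,\gamma}(Z(\tau))\big].
$$
The goal is to show $\psi'(\tau)\le 0$.

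The heart of the argument, and the main obstacle, is the sign analysis of the Hessian of $f_{\beta,\gamma}$ and its coupling to the two one-sided hypotheses. Writing the gradient as $\partial_{z_{\vu\vt}}f_{\beta,\gamma}=q_{\vu}\,p_{\vt|\vu}$, where $q_{\vu}\ge 0$ is the soft-min weight across blocks and $p_{\vt|\vu}\ge 0$ the soft-max weight within block $\vu$, a direct computation gives the sign pattern: the mixed second derivative is $\ge 0$ when $\vu\ne\vv$ (different blocks) and $\le 0$ when $\vu=\vv$ but $\vt\ne\vs$ (same block, off-diagonal), while the diagonal entries have indeterminate sign. To neutralize the troublesome diagonal, I use that the gradient entries $q_{\vu}p_{\vt|\vu}$ sum to one identically, whence each row (and, by symmetry of the Hessian, each column) sums to zero. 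This zero-sum property lets me pass from covariances to increments via $\E[X_{\vu\vt}X_{\vv\vs}]=\tfrac12(\E X_{\vu\vt}^2+\E X_{\vv\vs}^2-\E(X_{\vu\vt}-X_{\vv\vs})^2)$: the pure-variance terms are annihilated and
$$
\psi'(\tau)=-\frac14\sum_{(\vu\vt),(\vv\vs)}\Big(\E(X_{\vu\vt}-X_{\vv\vs})^2-\E(Y_{\vu\vt}-Y_{\vv\vs})^2\Big)\,\E\big[\partial^2_{z_{\vu\vt}z_{\vv\vs}}f_{\beta,\gamma}(Z(\tau))\big].
$$

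Now the two hypotheses match the two Hessian signs exactly. On same-block off-diagonal pairs ($\vu=\vv$) the increment difference is $\le 0$ while the Hessian entry is $\le 0$; on different-block pairs ($\vu\ne\vv$) the increment difference is $\ge 0$ while the Hessian entry is $\ge 0$; diagonal pairs contribute nothing. Every summand is thus $\le 0$, giving $\psi'(\tau)\le 0$ and hence $\E f_{\beta,\gamma}(X)=\psi(1)\le\psi(0)=\E f_{\beta,\gamma}(Y)$. Letting $\beta,\gamma\to\infty$ and passing the limit through the expectation (justified by dominated convergence, since on finite index sets the min--max is bounded by $\max_{\vu,\vt}|z_{\vu\vt}|$) yields $\E\inf_{\vu}\sup_{\vt}X_{\vu\vt}\le\E\inf_{\vu}\sup_{\vt}Y_{\vu\vt}$, as claimed. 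The delicate points to verify carefully are the validity of differentiating under the expectation in the interpolation, the Gaussian integration-by-parts step, the exact Hessian sign pattern, and the exchange of limits at the end.
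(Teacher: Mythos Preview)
Your argument is the standard Slepian--Gordon interpolation proof and is correct as outlined: the Hessian sign pattern you claim (nonnegative across blocks, nonpositive within a block off the diagonal), the zero row/column sums coming from $\sum_{\vu,\vt}q_{\vu}p_{\vt|\vu}\equiv 1$, and the increment rewriting all check out, and they combine exactly as you say to give $\psi'(\tau)\le 0$.

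There is nothing to compare against, however: the paper does not prove this lemma at all. It is quoted verbatim as Theorem~3.16 of Ledoux--Talagrand and used as a black-box tool in the proof of Theorem~\ref{th:main theorem 1}. So your proposal supplies strictly more than the paper does for this statement; the approach you give is essentially the one in the cited reference.
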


\begin{lemma}[Concentration of measure, Theorem 5.6, \cite{BOUCH2013}] \label{concentration of measure}
  	Let $X = (X_1, \dots, X_n)$ be a vector of $n$ independent standard normal random variables. Let $f$ : $\R^n \rightarrow \R$ denotes an L-Lipschitz function. Then, for all $t \geq 0$,
	$$
	\P \big\{ f(X) - \E f(X) \ge t \big\} \le e^{-t^2/(2L^2)}.
	$$
\end{lemma}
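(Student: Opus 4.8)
The claim is the classical Gaussian concentration inequality, and the plan is to follow the Herbst argument built on the Gaussian logarithmic Sobolev inequality. First I would reduce to the case $L = 1$ by replacing $f$ with $f/L$ and $t$ with $t/L$, which leaves the statement unchanged. Next I would reduce to smooth $f$: a general $1$-Lipschitz function can be approximated by its Gaussian mollifications $f_\varepsilon = f * \gamma_\varepsilon$, which are $C^\infty$, still satisfy $\|\nabla f_\varepsilon\| \le 1$ everywhere, and converge to $f$; proving the bound for each $f_\varepsilon$ and passing to the limit (dominated convergence for the means, Fatou for the tail) gives the result for $f$. With these reductions it suffices to control the moment generating function $H(\lambda) = \E\, e^{\lambda f(X)}$ and then apply a Chernoff bound.

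The core step is the Herbst argument. I would invoke the Gaussian logarithmic Sobolev inequality, $\mathrm{Ent}(h^2) \le 2\,\E\|\nabla h\|^2$ for smooth $h$, and apply it with $h = e^{\lambda f/2}$. Since $\nabla h = \tfrac{\lambda}{2}(\nabla f)\,e^{\lambda f/2}$ and $\|\nabla f\| \le 1$, the right-hand side is at most $\tfrac{\lambda^2}{2}\,\E\,e^{\lambda f}$, while the entropy equals $\lambda H'(\lambda) - H(\lambda)\log H(\lambda)$. This yields the differential inequality
$$
\lambda H'(\lambda) - H(\lambda)\log H(\lambda) \le \tfrac{\lambda^2}{2}\,H(\lambda).
$$
Setting $\phi(\lambda) = \tfrac{1}{\lambda}\log H(\lambda)$ for $\lambda > 0$, this is exactly $\phi'(\lambda) \le \tfrac12$. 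Since $\phi(\lambda) \to \E f(X)$ as $\lambda \to 0^+$ (from the first-order expansion of $\log H$), integrating from $0$ to $\lambda$ gives $\phi(\lambda) \le \E f(X) + \tfrac{\lambda}{2}$, i.e.
$$
\log \E\, e^{\lambda (f(X) - \E f(X))} \le \tfrac{\lambda^2}{2}, \qquad \lambda > 0.
$$

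Finally, the Chernoff bound gives, for every $\lambda > 0$,
$$
\P\{ f(X) - \E f(X) \ge t \} \le e^{-\lambda t}\,\E\, e^{\lambda(f(X)-\E f(X))} \le e^{-\lambda t + \lambda^2/2},
$$
and optimizing with $\lambda = t$ yields $e^{-t^2/2}$; undoing the normalization $f \mapsto f/L$ recovers the stated bound $e^{-t^2/(2L^2)}$. I expect the main obstacle to lie in the ingredient rather than in the bookkeeping: the Gaussian logarithmic Sobolev inequality is the substantive input, and if one does not wish to cite it one must prove it, most cleanly by tensorization from the one-dimensional Gaussian case (reducing the $n$-dimensional entropy bound to a sum of scalar statements) or via the Ornstein--Uhlenbeck semigroup together with its commutation identity $\nabla P_t = e^{-t} P_t \nabla$. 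A secondary technical point is making the smooth approximation fully rigorous, since Herbst's argument requires differentiability and the a.e. gradient bound, so care is needed to justify the mollification limit and the interchange of differentiation and expectation defining $H'(\lambda)$.
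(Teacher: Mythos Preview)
Your argument is correct: the reduction to $L=1$ and to smooth functions, the Herbst differential inequality derived from the Gaussian log--Sobolev inequality, and the final Chernoff optimization are all standard and carried out accurately. Note, however, that the paper does not give its own proof of this lemma at all; it simply quotes it as Theorem~5.6 of \cite{BOUCH2013} and uses it as a black box. The proof you sketch is in fact essentially the one presented in that reference, so there is nothing to compare against and no discrepancy to flag.
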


\begin{lemma} [Lemma 3.7, \cite{CHAN2012}] \label{relation between GW of cone and polar cone}
  	Let $\mathcal{D} \subset \R^n$ be a non-empty closed, convex cone. Then we have that
	$$
	\omega^2(\mathcal{D} \cap S^{n-1}) + \omega^2(\mathcal{D}^{\circ} \cap S^{n-1}) \le n.
	$$
\end{lemma}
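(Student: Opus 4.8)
The plan is to bound each squared spherical Gaussian width by the expected squared norm of a metric projection, and then invoke the Moreau decomposition to combine the two bounds into $\E\|\vg\|^2 = n$. First I would record a preliminary sign fact: since $\mathcal{D} \cap S^{n-1}$ is nonempty, choosing any fixed $\vt_0$ in it and taking expectations gives $\omega(\mathcal{D} \cap S^{n-1}) \ge \E\langle \vg, \vt_0\rangle = 0$, and likewise for $\mathcal{D}^{\circ}$, so both widths are nonnegative. Note, however, that for a fixed realization of $\vg$ the supremum $\sup_{\vt \in \mathcal{D} \cap S^{n-1}}\langle \vg, \vt\rangle$ may itself be negative; this is exactly the subtlety that must be controlled, because the statement involves its square.

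Next I would connect the supremum to a projection. For a nonempty closed convex cone $\mathcal{D}$, homogeneity of the linear objective yields
$$
\Big(\sup_{\vt \in \mathcal{D} \cap S^{n-1}} \langle \vg, \vt\rangle\Big)_+ = \sup_{\vt \in \mathcal{D},\, \|\vt\|\le 1} \langle \vg, \vt\rangle = \big\|\Pi_{\mathcal{D}}(\vg)\big\|,
$$
where $(\cdot)_+$ is the positive part and $\Pi_{\mathcal{D}}$ the metric projection onto $\mathcal{D}$. The final equality follows from the variational characterization of the conic projection: $\vg - \Pi_{\mathcal{D}}(\vg) \in \mathcal{D}^{\circ}$ and $\langle \vg - \Pi_{\mathcal{D}}(\vg), \Pi_{\mathcal{D}}(\vg)\rangle = 0$, so for every unit-norm $\vt \in \mathcal{D}$ one has $\langle \vg, \vt\rangle \le \langle \Pi_{\mathcal{D}}(\vg), \vt\rangle \le \|\Pi_{\mathcal{D}}(\vg)\|$, with equality at $\vt = \Pi_{\mathcal{D}}(\vg)/\|\Pi_{\mathcal{D}}(\vg)\|$ whenever the projection is nonzero (and the positive part equals $0 = \|\Pi_{\mathcal{D}}(\vg)\|$ when $\vg \in \mathcal{D}^{\circ}$).

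Then I would chain inequalities. Writing $X = \sup_{\vt \in \mathcal{D} \cap S^{n-1}}\langle \vg, \vt\rangle$, nonnegativity of the width gives $0 \le \E X \le \E X_+$, so by Jensen $(\E X)^2 \le (\E X_+)^2 \le \E X_+^2 = \E\|\Pi_{\mathcal{D}}(\vg)\|^2$; the identical argument for $\mathcal{D}^{\circ}$ gives $\omega^2(\mathcal{D}^{\circ} \cap S^{n-1}) \le \E\|\Pi_{\mathcal{D}^{\circ}}(\vg)\|^2$. Finally the Moreau decomposition, valid precisely because $\mathcal{D}$ is a nonempty closed convex cone, provides the orthogonal splitting $\vg = \Pi_{\mathcal{D}}(\vg) + \Pi_{\mathcal{D}^{\circ}}(\vg)$ with $\langle \Pi_{\mathcal{D}}(\vg), \Pi_{\mathcal{D}^{\circ}}(\vg)\rangle = 0$, so that $\|\Pi_{\mathcal{D}}(\vg)\|^2 + \|\Pi_{\mathcal{D}^{\circ}}(\vg)\|^2 = \|\vg\|^2$ pointwise. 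Adding the two bounds and taking expectations yields $\omega^2(\mathcal{D}\cap S^{n-1}) + \omega^2(\mathcal{D}^{\circ}\cap S^{n-1}) \le \E\|\vg\|^2 = n$.

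The main obstacle is the second step. Because the statement squares each width, the sign of the inner-product supremum over the sphere matters, and one cannot naively replace $\sup_{\vt \in \mathcal{D}\cap S^{n-1}}\langle\vg,\vt\rangle$ by $\|\Pi_{\mathcal{D}}(\vg)\|$ inside the square on individual realizations. The positive-part identity, combined with the nonnegativity of the expected width, is what bridges this gap cleanly; once that identification is in place, the remaining ingredients (Jensen's inequality and the Moreau decomposition) are routine.
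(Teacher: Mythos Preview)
The paper does not give its own proof of this lemma; it is quoted directly from \cite{CHAN2012} (Lemma~3.7 there) and used as a black box. Your argument is correct and is precisely the standard route taken in that reference: bound each spherical width by $\big(\E\|\Pi_{\mathcal{D}}(\vg)\|^2\big)^{1/2}$ via the positive-part identity and Jensen's inequality, then invoke the Moreau decomposition $\|\Pi_{\mathcal{D}}(\vg)\|^2 + \|\Pi_{\mathcal{D}^{\circ}}(\vg)\|^2 = \|\vg\|^2$ and take expectations to obtain the bound $n$.
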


\begin{lemma} \label{Lipschitz function}
  	Let $\Omega_1$ and $\Omega_2$ be subsets of $S^{m-1}$ and $S^{n-1}$ respectively. Then the function
	$$
	F(\mPsi) = \min_{\vt \in \Omega_1} \max_{\vu \in \Omega_2} \left< \mPsi \vu, \vt \right>
	$$
	is a 1-Lipschitz function, where $\mPsi$ is the same as in (\ref{corrupted sensing model}).
\end{lemma}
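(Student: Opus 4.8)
The plan is to prove the sharper quantitative statement that for any two matrices $\mPsi_1,\mPsi_2\in\R^{m\times n}$ we have $|F(\mPsi_1)-F(\mPsi_2)|\le \|\mPsi_1-\mPsi_2\|_F$, where $\|\cdot\|_F$ is the Frobenius norm; this is exactly $1$-Lipschitzness once $\mPsi$ is identified with a vector in $\R^{mn}$ carrying the Euclidean norm, which is the viewpoint under which Lemma \ref{concentration of measure} will later be applied.

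First I would isolate an elementary principle: taking a supremum or an infimum over a fixed index set is non-expansive. Concretely, for any two real-valued functions $p,q$ on a common set $S$,
$$
\Big|\inf_{x\in S} p(x)-\inf_{x\in S} q(x)\Big|\le \sup_{x\in S}|p(x)-q(x)|,\qquad \Big|\sup_{x\in S} p(x)-\sup_{x\in S} q(x)\Big|\le \sup_{x\in S}|p(x)-q(x)|.
$$
Applying this twice to $F(\mPsi)=\min_{\vt\in\Omega_1}\max_{\vu\in\Omega_2}\langle\mPsi\vu,\vt\rangle$ — once for the outer $\min$ over $\Omega_1$ and once for the inner $\max$ over $\Omega_2$ — reduces the estimate to a uniform bound on the bilinear form $|\langle(\mPsi_1-\mPsi_2)\vu,\vt\rangle|$ over $\vu\in\Omega_2\subset S^{n-1}$ and $\vt\in\Omega_1\subset S^{m-1}$.

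Second, I would rewrite this bilinear form with the Frobenius inner product, $\langle(\mPsi_1-\mPsi_2)\vu,\vt\rangle=\langle\mPsi_1-\mPsi_2,\,\vt\vu^{\top}\rangle$, and note that the rank-one matrix $\vt\vu^{\top}$ has Frobenius norm $\|\vt\|_2\|\vu\|_2=1$ because $\vt$ and $\vu$ are unit vectors. Cauchy--Schwarz then gives $|\langle\mPsi_1-\mPsi_2,\vt\vu^{\top}\rangle|\le\|\mPsi_1-\mPsi_2\|_F$ uniformly in $\vu$ and $\vt$; chaining this with the two non-expansiveness steps yields the claim.

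I do not expect a genuine obstacle here. The only point that warrants a sentence of justification is the first one — why $\min$ and $\max$ pass through the absolute value without loss — which follows for the infimum case by writing $p(x)\le q(x)+\sup_{y}|p(y)-q(y)|$ for every $x$, taking $\inf_x$ on both sides, and then swapping $p$ and $q$; the supremum case is identical. Everything else is a direct Cauchy--Schwarz estimate together with the identity $\|\vt\vu^{\top}\|_F=1$ for unit vectors.
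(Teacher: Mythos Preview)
Your proof is correct and follows essentially the same route as the paper: both reduce to the estimate $|\langle(\mPsi_1-\mPsi_2)\vu,\vt\rangle|\le\|\mPsi_1-\mPsi_2\|_F$ for unit vectors $\vu,\vt$ via Cauchy--Schwarz. The only cosmetic difference is that the paper peels off the $\min$/$\max$ by selecting explicit optimizers $\vu_0(\vt)$ and $\vt_0$, whereas you invoke the non-expansiveness of $\inf$ and $\sup$ directly; your version has the small advantage of not requiring the extrema over $\Omega_1,\Omega_2$ to be attained.
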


\begin{proof}
  	See Appendix \ref{app:B}.
\end{proof}

\subsection{Proof of Main Results}
According to Remark \ref{remark for sufficient condition}, we only need to prove that when
$$
\sqrt{m} < \sqrt{\omega^2 \big(\mathcal{D}_s \cap S^{n-1} \big) + \omega^2 \big(\mathcal{D}_c \cap S^{m-1} \big)} - t,
$$
the following event
$$
\min_{\| \vr \| = 1} \min_{ \vs \in \Ds^{\circ} \times \Dc^{\circ}} \big\| \vs - \mA^*\vr \big\| > 0
$$
holds with probability at least $1-e^{-t^2/2}$. Moreover, a simple calculation verifies that this inequality is equivalent to
\begin{align} \label{equivalent condition for sufficient condition}
  	& \min_{\|\vr\|=1} \min_{\vs \in \Ds^{\circ} \times \Dc^{\circ}} \|\vs - \mA^* \vr \|_2 > 0 \quad \notag\\
	& \hspace{2pt} \Longleftrightarrow \quad \min_{\|\vr\|=1} \min_{\vs \in \Ds^{\circ} \times \Dc^{\circ}} \|\vs - \mA^* \vr \|_2^2 > 0 \notag\\
	& \hspace{2pt} \Longleftrightarrow \quad \min_{\|\vr\|=1} \min_{\vs_1 \in \Ds^{\circ} \atop \vs_2 \in \Dc^{\circ}} \Big[ \big\| \vs_1 - \mPsi^* \vr \big\|_2^2 + \big\| \vs_2 - \vr \big\|_2^2 \Big] > 0.
\end{align}
Now, we will consider two cases for $\vr$: \\
\textbf{Case I: $\vr \in \Dc^{\circ} \cap S^{m-1}$}. In this case, when we minimize over $\vs_2$, the second term $\big\| \vs_2 - \vr \big\|_2^2$ will be zero. Thus, the above inequality (\ref{equivalent condition for sufficient condition}) is equivalent to
\begin{align} \label{equivalence 2}
  	& \min_{\vr \in \Dc^{\circ} \cap S^{m-1}} \min_{\vs_1 \in \Ds^{\circ} \atop \vs_2 \in \Dc^{\circ}} \Big[ \big\| \vs_1 - \mPsi^* \vr \big\|_2^2 + \big\| \vs_2 - \vr \big\|_2^2 \Big] > 0 \notag\\
	& \hspace{35pt}\Longleftrightarrow \quad \min_{\vr \in \Dc^{\circ} \cap S^{m-1}} \min_{\vs_1 \in \Ds^{\circ}} \big\| \vs_1 - \mPsi^* \vr \big\|_2^2 > 0 \notag \\
	& \hspace{35pt}\Longleftrightarrow \quad \min_{\vr \in \Dc^{\circ} \cap S^{m-1}} \min_{\vs_1 \in \Ds^{\circ}} \big\| \vs_1 - \mPsi^* \vr \big\|_2 > 0.
\end{align}
For our purpose, we need to lower bound the left side of (\ref{equivalence 2}). 
Note that for any fixed $\vr \in \Dc^{\circ} \cap S^{m-1}$, we have
\begin{align*}
  	\min_{\vs_1 \in \mathcal{D}_s^{\circ}} \|\vs_1 - \mPsi^* \vr \|_2
  	&= \min_{\vs_1 \in \mathcal{D}_s^{\circ}} \max_{\vu \in S^{n-1}} \left<\vu, \mPsi^* \vr -\vs_1 \right> \\
  	&\ge \max_{\vu \in S^{n-1}} \min_{\vs_1 \in \mathcal{D}_s^{\circ}} \left<\vu, \mPsi^* \vr -\vs_1 \right> \\
  	&= \max_{\vu \in S^{n-1}} \big[ \left< \vu, \mPsi^* \vr \right> - \max_{\vs \in \mathcal{D}_s^{\circ}} \left<\vu, \vs \right> \big] \\
  	&= \max_{\vu \in \mathcal{D}_s \cap S^{n-1}} \left< \vu, \mPsi^* \vr \right> \\
  	&= \max_{\vu \in \mathcal{D}_s \cap S^{n-1}} \left< \mPsi \vu, \vr \right>.
\end{align*}
The first equality is due to the definition of $\ell_2$-norm. The first inequality is because of the minimax inequality. The second equality comes from the linear property of inner product. The third equality uses the fact that $\max_{\vs \in \mathcal{D}_s^{\circ}} \left<\vu, \vs \right> = 0$ when $\vu \in \mathcal{D}_s$, otherwise it equals $\infty$. The last equality can be derived by a simple transformation. As the above inequality holds for any $\vr \in \Dc^{\circ} \cap S^{m-1}$, we have
\begin{multline} \label{bound for min min norms}
	\min_{\vr \in \mathcal{D}_c^{\circ} \cap S^{m-1}} \min_{\vs_1 \in \mathcal{D}_s^{\circ}} \|\vs_1 - \mPsi^* \vr \|_2 \\
	\ge \min_{\vr \in \mathcal{D}_c^{\circ} \cap S^{m-1}} \max_{\vu \in \mathcal{D}_s \cap S^{n-1}} \left< \mPsi \vu, \vr \right>.
\end{multline}
It remains to bound the right side. To this end, we will first use Gordon's inequality (Lemma \ref{Gordon's inequality}) to derive a lower bound for the expectation, and then concentration of measure (Lemma \ref{concentration of measure}) to obtain the desired result. Let $X_{\vr \vu} := \left< \mPsi \vu, \vr \right>$ and $Y_{\vr \vu} := \left< \vg, \vr \right> + \left< \vh, \vu \right>$ be two Gaussian processes, where $\vg \sim N(\vct{0}, \mI_{m \times m})$ and $\vh \sim N(\vct{0}, \mI_{n \times n})$ are independent standard Gaussian random vectors. It can be easily checked that the increments satisfy
$$
\E (X_{\vr \vu} - X_{\vr \vu'})^2 = \big\| \vu - \vu' \big\|_2^2 = \E (Y_{\vr \vu} - Y_{\vr \vu'})^2,
$$
\begin{align*}
	\E (X_{\vr \vu} - X_{\vr' \vu'})^2
	&= \big\| \vu \vr^T - \vu' \vr'^T\big\|_F^2 \\
	&\le \big\| \vu - \vu' \big\|_2^2 + \big\| \vr - \vr'\big\|_2^2 \\
	&= \E (Y_{\vr \vu} - Y_{\vr' \vu'})^2.
\end{align*}
Therefore, Gordon's inequality (Lemma \ref{Gordon's inequality}) gives us:
\begin{align} \label{eq:comparison result}
  	& \E \min_{\vr \in \Dc^{\circ} \cap S^{m-1}} \max_{\vu \in \Ds \cap S^{n-1}} X_{\vr \vu} \notag\\
	& \hspace{25pt}\ge \E \min_{\vr \in \Dc^{\circ} \cap S^{m-1}} \max_{\vu \in \Ds \cap S^{n-1}} Y_{\vr \vu} \notag \\
	& \hspace{25pt} =  \E \min_{\vr \in \Dc^{\circ} \cap S^{m-1}} \left<\vg, \vr\right> + \E \max_{\vu \in \Ds \cap S^{n-1}} \left<\vh, \vu\right>.
\end{align}
Since $\vg$ is a symmetric random vector, we have
\begin{align*}
	\E \min_{\vr \in \Dc^{\circ} \cap S^{m-1}} \left<\vg, \vr\right> &= \E \min_{\vr \in \Dc^{\circ} \cap S^{m-1}} \left<-\vg, \vr\right> \\
	&= - \E \max_{\vr \in \Dc^{\circ} \cap S^{m-1}} \left<\vg, \vr\right> \\
	&= - \omega(\Dc^{\circ} \cap S^{m-1}).
\end{align*}
Substituting this into (\ref{eq:comparison result}), we get
\begin{equation} \label{eq:Gordon's result}
	\E \min_{\vr \in \Dc^{\circ} \cap S^{m-1}} \max_{\vu \in \Ds \cap S^{n-1}} X_{\vr \vu} \ge \omega(\Ds \cap S^{n-1}) - \omega(\Dc^{\circ} \cap S^{m-1}).
\end{equation}
As $\Dc$ is a closed convex cone,
by Lemma \ref{relation between GW of cone and polar cone}, we know that
$$
\hspace{13pt}\omega^2 \big(\mathcal{D}_c^{\circ} \cap S^{m-1} \big) + \omega^2 \big(\mathcal{D}_c \cap S^{m-1} \big)
\le m,
$$
which implies
$$
\omega(\mathcal{D}_c^{\circ} \cap S^{m-1}) \le \sqrt{m - \omega^2(\mathcal{D}_c \cap S^{m-1})}.
$$
Substituting this into (\ref{eq:Gordon's result}), we get the following result:
\begin{align} \label{eq:bound for expectation}
  	&\E \min_{\vr \in \mathcal{D}_c^{\circ} \cap S^{m-1}} \max_{\vu \in \mathcal{D}_s \cap S^{n-1}} \left< \mPsi \vu, \vr \right> \notag\\
	& \hspace{20pt}\ge  \omega(\mathcal{D}_s \cap S^{n-1}) - \sqrt{m - \omega^2(\mathcal{D}_c \cap S^{m-1})} \notag \\
	& \hspace{20pt}\ge \sqrt{\omega^2(\mathcal{D}_s \cap S^{n-1}) + \omega^2(\mathcal{D}_c \cap S^{m-1})} - \sqrt{m}.
\end{align}
In the last inequality, we have used the assumption that $\omega^2(\mathcal{D}_s \cap S^{n-1}) + \omega^2(\mathcal{D}_c \cap S^{m-1}) > m$.

Next, Lemma \ref{Lipschitz function} confirms that the following function
$$
\min_{\vr \in \mathcal{D}_c^{\circ} \cap S^{m-1}} \max_{\vu \in \mathcal{D}_s \cap S^{n-1}} \left< \mPsi \vu, \vr \right>
$$
is a $1$-Lipschitz function. Thus, concentration of measure (Lemma \ref{concentration of measure}) gives us that for any $t \geq 0$,
\begin{align*}
	&\P \Big\{ \min_{\vr \in \mathcal{D}_c^{\circ} \cap S^{m-1}} \max_{\vu \in \mathcal{D}_s \cap S^{n-1}} \left< \mPsi \vu, \vr \right>- \\
	&\hspace{50pt} \E \min_{\vr \in \mathcal{D}_c^{\circ} \cap S^{m-1}} \max_{\vu \in \mathcal{D}_s \cap S^{n-1}} \left< \mPsi \vu, \vr \right> \ge - t \Big\} \\
	&\hspace{50pt}\ge 1- \exp(-t^2/2).
\end{align*}
Putting the above inequality and (\ref{eq:bound for expectation}), (\ref{bound for min min norms}), (\ref{equivalent condition for sufficient condition}), (\ref{equivalence 2}) together, we eventually get that when
$$
\sqrt{m} < \sqrt{\omega^2 \big(\mathcal{D}_s \cap S^{n-1} \big) + \omega^2 \big(\mathcal{D}_c \cap S^{m-1} \big)} - t,
$$
we have
$$
\P \Big\{ \min_{ \vr \in \mathcal{D}_c^{\circ} \cap S^{m-1} } \min_{\vs \in \Ds \times \mathcal{D}_s^{\circ}} \|\vs - \mA^* \vr \|_2 > 0 \Big\} \ge 1- \exp(-t^2/2).
$$
\textbf{Case II: $\vr \notin \Dc^{\circ} \cap S^{m-1}$}. In this case, it is clear that no matter what $\vr$ and $\vs_2$ takes value, it is always holds that
$$
\big\| \vs_2 - \vr \big\|_2^2 > 0.
$$
Thus,
$$
\P \Big\{ \min_{\vr \in S^{m-1} \setminus (\mathcal{D}_c^{\circ} \cap S^{m-1})} \min_{\vs_1 \in \mathcal{D}_s^{\circ}} \|\vs_1 - \mPsi^* \vr \|_2 > 0 \Big\} = 1,
$$
which, by (\ref{equivalent condition for sufficient condition}) and (\ref{equivalence 2}), implies that
$$
\P \Big\{ \min_{ \vr \in S^{m-1} \setminus (\mathcal{D}_c^{\circ} \cap S^{m-1}) } \min_{\vs \in \Ds \times \mathcal{D}_s^{\circ}} \|\vs - \mA^* \vr \|_2 > 0 \Big\} = 1.
$$
\textbf{Union bound}. Combining case I and case II and taking a union bound, we have
$$
\P \Big\{ \min_{ \| \vr \|_2 = 1 } \min_{\vs \in \Ds \times \mathcal{D}_s^{\circ}} \|\vs - \mA^* \vr \|_2 > 0 \Big\} \ge 1- \exp(-t^2/2),
$$
provided
$$
\sqrt{m} < \sqrt{\omega^2 \big(\mathcal{D}_s \cap S^{n-1} \big) + \omega^2 \big(\mathcal{D}_c \cap S^{m-1} \big)} - t.
$$
By Lemma \ref{sufficient condition for failure: step 1} and Lemma \ref{sufficient condition for failure: step 2}, it means that when
$$
\sqrt{m} < \sqrt{\omega^2 \big(\mathcal{D}_s \cap S^{n-1} \big) + \omega^2 \big(\mathcal{D}_c \cap S^{m-1} \big)} - t,
$$
the convex program (\ref{constrained convex program 1}) or (\ref{constrained convex program 2}) fails with probability at least $1-\exp(-t^2/2)$. This completes the proof.

\section{Proof of Lemma \ref{Lipschitz function}} \label{app:B}
To prove Lemma \ref{Lipschitz function}, we only need to show that for any $\mC, \mD \in \R^{m \times n}$
\begin{align*}
	\big| F(\mC) - F(\mD) \big| &= \Big| \min_{\vt \in \Omega_1} \max_{ \vu \in \Omega_2} \left< \mC \vu, \vt \right> - \min_{\vt \in \Omega_1} \max_{ \vu \in \Omega_2} \left< \mD \vu, \vt \right> \Big| \\
	&\le \| \mC - \mD \|_F.
\end{align*}
For any fixed $\vt \in \Omega_1$, let
$$
\vu_0(\vt) \in \argmax_{\vu \in \Omega_2} \left< \mC \vu, \vt \right>.
$$
And we have
$$
\max_{\vu \in \Omega_2} \left< \mD \vu, \vt \right> \ge \left< \mD \vu_0(\vt), \vt \right>.
$$
Then, let
$$
\vt_0 \in \argmin_{\vt \in \Omega_1} \left< \mD \vu_0(\vt), \vt \right>,
$$
and we have
\begin{align*}
  	F(\mC) = \min_{\vt \in \Omega_1} \max_{ \vu \in \Omega_2} \left< \mC \vu, \vt \right> &= \min_{\vt \in \Omega_1} \left< \mC \vu_0(\vt), \vt \right> \\
	&\le \left< \mC \vu_0(\vt_0), \vt_0 \right>.
\end{align*}
Similarly,
\begin{align*}
  	F(\mD) = \min_{\vt \in \Omega_1} \max_{ \vu \in \Omega_2} \left< \mD \vu, \vt \right> &\ge \min_{\vt \in \Omega_1} \left< \mD \vu_0(\vt), \vt \right> \\
	&= \left< \mD \vu_0(\vt_0), \vt_0 \right>.
\end{align*}
Therefore,
\begin{align} \label{difference 1}
	F(\mC) - F(\mD) &\le \left< \mC \vu_0(\vt_0), \vt_0 \right> - \left< \mD \vu_0(\vt_0), \vt_0 \right> \notag\\
	&= \left< (\mC - \mD) \vu_0(\vt_0), \vt_0 \right> \notag\\
	&\le \big\| (\mC - \mD) \vu_0(\vt_0) \big\|_2 \big\|\vt_0 \big\|_2 \notag\\
	&\le \big\| \mC - \mD \big\|_2 \le \| \mC - \mD \|_F.
\end{align}
The same argument gives
\begin{equation} \label{difference 2}
F(\mD) - F(\mC) \le \| \mC - \mD \|_F.
\end{equation}
Thus, combining (\ref{difference 1}) and (\ref{difference 2}), we get
$$
\big| F(\mC) - F(\mD) \big| \le \| \mC - \mD \|_F.
$$
The conclusion follows immediately.

\bibliographystyle{IEEEtran}
\bibliography{IEEEabrv,references}

\end{document}